\documentclass[fleqn,11p,times]{elsarticle}
\usepackage[utf8]{inputenc}

\usepackage{natbib}
\usepackage{setspace}
\usepackage{graphicx}
\usepackage{graphics}
\usepackage{xcolor}
\definecolor{navy}{rgb}{0.000000,0.000000,0.501961}
\graphicspath{{Images/}}
\usepackage[labelfont=bf]{caption}
\usepackage{subcaption}

\usepackage[ruled,vlined]{algorithm2e}

\usepackage{amsmath}

\usepackage{amsthm}
\theoremstyle{definition}
\newtheorem{definition}{Definition}
\newtheorem{theorem}{Theorem}

\makeatletter
\def\els@aparagraph[#1]#2{\elsparagraph[#1]{#2\@addpunct{.}}}
\def\els@bparagraph#1{\elsparagraph*{#1\@addpunct{.}}}
\SetAlgoCaptionSeparator{.}
\RestyleAlgo{algoruled}

\makeatother

\usepackage{float}



\begin{document}

\begin{frontmatter}

\title{MLN-EIGS: A multilayer network framework for solving Stackelberg escape interdiction games on dynamic transportation networks}

\vspace{0.4cm}
\author[1]{Sukanya Samanta\corref{cor1}}
\ead{susamanta1@gmail.com}
\cortext[cor1]{Corresponding author}

\author[1]{Kei Kimura}

\author[1]{Makoto Yokoo}

\author[2]{Palash Dey}

\vspace{0.4cm}

\address[1]{Department of Informatics, Information Science and Electrical Engineering (ISEE), Kyushu University, Fukuoka, 819-0395, Japan}

\address[2]{Department of Computer Science and Engineering, Indian Institute of Technology Kharagpur, Kharagpur, West Bengal 721302, India}

\vspace{0.4cm}

\begin{abstract}

Interdicting an escaping criminal with limited police resources on large-scale transportation networks is a challenging problem due to the dynamic nature of both attacker movement and defender deployment. This paper proposes \emph{MLN-EIGS}, a multilayer network-based framework for solving dynamic escape interdiction problems formulated as a Stackelberg security game. A time-expanded multilayer network is constructed to explicitly model the temporal evolution of the transportation network and the feasible movements of both the attacker and the defenders. The attacker seeks to maximize the probability of successful escape, while the defenders aim to maximize the probability of interdiction. To efficiently compute the attacker's best response, the probabilistic escape formulation is transformed into an equivalent shortest-path problem through a logarithmic transformation, enabling the use of Dijkstra's algorithm. Since the defender best-response problem is computationally intractable, an approximation defender oracle is developed to generate high-quality defender strategies on the multilayer network. The proposed MLN-EIGS framework is benchmarked against an exact mixed-integer linear programming (MILP)-based Stackelberg formulation on a large real-world transportation network. Computational experiments demonstrate that MLN-EIGS consistently achieves defender utilities that closely match those of the exact MILP approach while substantially reducing computational time. These results demonstrate that the proposed MLN-EIGS framework provides an effective, computationally efficient, and scalable alternative to exact MILP-based Stackelberg optimization for large-scale dynamic escape interdiction problems.

\end{abstract}

\begin{keyword}
Stackelberg security games, escape interdiction, multilayer time-expanded network, exact attacker oracle, approximation defender oracle, transportation networks
\end{keyword}
\end{frontmatter}

\section{Introduction}
\label{S:1}

Escape interdiction is a fundamental problem in public safety and transportation security, where law enforcement agencies attempt to intercept an escaping criminal using limited police resources deployed over a large transportation network. The dynamic nature of the problem, together with the temporal evolution of attacker and defender movements, makes strategy generation computationally challenging. Existing approaches are largely based on static network models or exact mixed-integer linear programming (MILP) formulations, which become computationally expensive for large-scale transportation networks.

In this paper, we formulate the dynamic escape interdiction problem as a Stackelberg security game in which the defender acts as the leader and the attacker acts as the follower. The defenders are assumed to know only the crime location and must determine patrol strategies that maximize the probability of intercepting the attacker before reaching one of several predefined exit nodes. After observing the defender's committed mixed strategy, the attacker computes an optimal escape route that minimizes the probability of interdiction.

To efficiently model the temporal evolution of both players, we introduce a multilayer time-expanded transportation network in which each layer represents the transportation network at a particular time instant. The proposed framework, referred to as \textbf{MLN-EIGS} (Multi-Layer Network Escape Interdiction Game Solver), combines an exact attacker oracle with an approximation defender oracle for the defender within a double-oracle framework. The attacker best-response problem is reformulated through a logarithmic transformation into an equivalent shortest-path problem and solved exactly using Dijkstra's algorithm on the multilayer network. Since the defender best-response problem is computationally intractable (see Section~\ref{S:4.3}), an approximation defender oracle is developed to generate high-quality defender strategies while maintaining computational scalability.

Compared with classical zero-sum formulations, the Stackelberg framework more accurately models practical escape interdiction scenarios because the defenders commit to their mixed strategy before the attacker selects an escape route. This sequential decision-making process enables the attacker to compute a best response to the defender's strategy and allows the equilibrium to capture the inherent leader--follower interaction between law enforcement agencies and criminals.

The main contributions of this paper are summarized as follows.

\begin{itemize}
    \item We propose \textbf{MLN-EIGS}, a novel multilayer network-based Stackelberg escape interdiction framework that integrates temporal network expansion, probabilistic interdiction modeling, and Stackelberg game theory into a unified optimization framework for dynamic transportation networks.

    \item We develop an \textbf{exact attacker oracle} by reformulating the multiplicative escape-probability objective into an equivalent shortest-path problem through a logarithmic transformation, enabling efficient computation using Dijkstra's algorithm on the multilayer network.

    \item We develop a \textbf{polynomial-time approximation defender oracle} for the defender best-response problem within the double-oracle framework, allowing scalable equilibrium computation on large transportation networks.

    \item We provide a \textbf{theoretical analysis} of the proposed framework, including the computational complexity of the attacker oracle and the computational hardness of the defender best-response problem.

    \item We conduct extensive \textbf{computational experiments} on a large real-world transportation network and demonstrate that the proposed MLN-EIGS framework achieves defender utilities comparable to those of the benchmark MILP-EIGS formulation while significantly reducing computational time.
\end{itemize}

The primary novelty of this work lies in integrating dynamic probabilistic interdiction modeling with a Stackelberg security game framework on a multilayer time-expanded transportation network. Unlike existing approaches that primarily consider static security games or exact MILP-based interdiction models, the proposed framework simultaneously captures (i) temporal feasibility of attacker and defender movements, (ii) multiplicative probabilistic interception along escape paths, and (iii) computationally scalable equilibrium computation for large-scale transportation networks. By combining a multilayer network representation with a logarithmic reformulation of the escape probability and a double-oracle solution framework, the proposed approach enables exact computation of the attacker best response through shortest-path optimization while efficiently approximating the defender best response. Consequently, the proposed framework achieves substantially improved computational scalability compared with conventional MILP-based formulations while maintaining competitive solution quality.

The remainder of this paper is organized as follows. Section~\ref{S:2} reviews the related literature. Section~\ref{S:3} presents the problem formulation and mathematical model. Section~\ref{S:4} describes the proposed MLN-EIGS framework, including the attacker and defender strategy-generation algorithms. The benchmark MILP-EIGS formulation is presented in Section~\ref{S:5}. Computational experiments and performance comparisons are reported in Section~\ref{S:6}. Finally, Section~\ref{S:7} concludes the paper and outlines directions for future research.

\section{Related work}
\label{S:2}

Security games have received considerable attention because of their wide range of applications in public safety, transportation security, wildlife protection, and critical infrastructure defense (\citet{hunt2024review}, \citet{samanta2022literature}). Among them, Stackelberg security games have become one of the most widely adopted frameworks for modeling strategic interactions between defenders and attackers, where the defender commits to a strategy before the attacker responds. Several studies have investigated escape interdiction and patrolling problems on transportation networks using Stackelberg formulations. For example, \citet{zychowski2023coevolution} consider search games on directed graphs with multiple defenders and propose a genetic algorithm for large synthetic instances. \citet{basilico2009leader} formulate leader--follower security games as mathematical programming models to compute equilibrium strategies. Likewise, \citet{letchford2013solving} develop polynomial-time algorithms for graph-based security games, while \citet{iwashita2016simplifying} improve computational scalability by reducing graph size through edge elimination. Other representative studies include Monte Carlo Tree Search for patrol planning (\citet{karwowski2019monte}) and repeated Stackelberg security games incorporating human behavioral models (\citet{wang2019repeated}).

Several studies have focused on exact optimization techniques for computing Stackelberg equilibria. For example, \citet{tsai2010urban} formulate urban security allocation as a linear programming problem to compute optimal mixed strategies under limited defender resources. Similar optimization-based frameworks have been developed for maritime security (\citet{shieh2012protect}) and other security applications (\citet{cermak2016using}, \citet{lou2017multidefender}, \citet{sinha2018stackelberg}, \citet{vcerny2018incremental}, \citet{zhang2021bayesian}). The computational foundations of optimal commitment strategies in Stackelberg games were established by \citet{conitzer2006computing}. Subsequently, \citet{paruchuri2008playing} developed a MILP-based approach for Bayesian Stackelberg games, while \citet{bosansky2015sequence} proposed scalable sequence-form algorithms for computing Strong Stackelberg Equilibria. Although these exact optimization approaches compute high-quality equilibrium solutions, their computational requirements often limit their applicability to large-scale transportation networks.

Escape interdiction has also been extensively studied under zero-sum game formulations. For example, \citet{zhang2017optimal} propose an exact MILP formulation for escape interdiction and demonstrate its effectiveness on grid networks. To improve computational scalability, \citet{samanta2022vns} develop a metaheuristic-based solution approach, whereas \citet{samanta2021vehicle} propose a simulation-based framework for escape interdiction on large transportation networks. While these approaches improve computational performance, they primarily consider static or deterministic formulations and do not explicitly model temporal network evolution within a Stackelberg framework.

Time-expanded and multilayer network representations have proven effective for solving dynamic optimization problems on transportation networks. For instance, \citet{saito2009discovering} employ layered graphs to identify important nodes in dynamic networks by explicitly modeling temporal evolution through multiple network layers. Such representations naturally capture time-dependent movement constraints and have become an effective modeling paradigm for dynamic transportation problems.

Despite these advances, existing studies either focus on static Stackelberg security games, deterministic escape interdiction models, or computationally expensive MILP formulations. To the best of our knowledge, no existing work simultaneously integrates (i) temporal feasibility of attacker and defender movements, (ii) probabilistic escape and interdiction modeling, and (iii) computationally scalable Stackelberg equilibrium computation on large-scale transportation networks. The proposed MLN-EIGS framework addresses this gap by combining a multilayer time-expanded network representation with an exact attacker oracle, an approximation defender oracle for the defender, and a double-oracle Stackelberg solution framework.

\section{Problem description and modeling}
\label{S:3}

We consider a two-player Stackelberg security game played between a team of defenders and a single attacker. The defender acts as the leader and commits to a mixed strategy before the attacker, who acts as the follower, selects an optimal escape route in response. Although multiple police units participate in the interdiction process, they cooperate as a single decision maker and therefore constitute one player in the Stackelberg game.

Let $\overline{D}=\{d_r \mid r\in R\}$ denote the set of defenders, where
$R=\{1,\ldots,m\}$ and $m$ is the total number of defenders. The attacker is denoted by $\overline{A}$. The defender has a finite set of pure strategies $\Acute{S}$, while the attacker has a finite set of pure strategies $\Acute{A}$. Let $x$ and $y$ denote the corresponding mixed strategies, i.e., probability distributions over $\Acute{S}$ and $\Acute{A}$, respectively.

The transportation network is represented by a directed graph
$G=(V,E)$, where $V$ denotes the set of intersections and $E$ denotes the set of directed road segments. A subset of nodes represents predefined exit points through which the attacker may leave the transportation network. Let $v_{\infty}$ denote an arbitrary exit node. The game begins at time $0$ and terminates at time $t_{\max}>0$.

A pure attacker strategy is a time-ordered sequence of node--time pairs

\[
A=
\langle
a_1=(v_0^a,0),
\ldots,
a_j=(v_j,t_j^a),
\ldots,
a_k=(v_{\infty},t_k^a)
\rangle,
\]

where $t_k^a\le t_{\max}$. Each state
$a_j=(v_j,t_j^a)$
indicates that the attacker occupies node $v_j$ at time $t_j^a$.

A pure defender strategy $S$ consists of one patrol schedule for each defender,

\[
S=\{S^r:r\in R\},
\]

where the patrol schedule of defender $d_r$ is

\[
S^r=
\langle
s_1^r,\ldots,s_i^r,\ldots,s_k^r
\rangle,
\]

with

\[
s_i^r=(v_i^r,t_{i}^{r,\mathrm{in}},t_{i}^{r,\mathrm{out}}).
\]

Each state specifies that defender $d_r$ occupies node $v_i^r$ during the time interval

\[
[t_i^{r,\mathrm{in}},\,t_i^{r,\mathrm{out}}].
\]

The defender's mixed strategy is denoted by

\[
x=\{x_S:S\in\Acute{S}\},
\]

where $x_S$ is the probability assigned to pure strategy $S$.

Defender $d_r$ intercepts the attacker whenever both occupy the same node at the same time, i.e.,

\[
v_i^r=v_j
\]

and

\[
t_i^{r,\mathrm{in}}
\le
t_j^a
\le
t_i^{r,\mathrm{out}}.
\]

Since the defender receives a payoff of one if the attacker is intercepted and zero otherwise, the defender's expected utility under mixed strategy $x$ and attacker pure strategy $A$ equals the probability of successful interdiction:

\[
U_d(x,A)
=
\Pr(\text{attacker is intercepted}\mid x,A).
\]

Similarly, the attacker's utility equals the probability of successfully escaping the transportation network:

\[
U_a(x,A)
=
\Pr(\text{attacker escapes}\mid x,A).
\]

Since interception and escape are complementary events,

\[
U_a(x,A)
=
1-U_d(x,A).
\]

Consequently, maximizing the defender's utility is equivalent to maximizing the probability of interdiction, whereas maximizing the attacker's utility is equivalent to maximizing the probability of successful escape.

When the attacker adopts a mixed strategy $y$, the defender's expected utility is denoted by $U_d(x,y)$ and is obtained as the expectation of $U_d(x,A)$ over all attacker pure strategies.

Given the attacker's pure strategy set, the defender computes an optimal mixed strategy by solving the following restricted master linear program:

\begin{equation}
\max \quad U
\end{equation}

\begin{equation}
\text{s.t.}\quad
U
\le
U_d(x,A),
\qquad
\forall A\in\Acute{A},
\end{equation}

\begin{equation}
\sum_{S\in\Acute{S}}x_S=1,
\qquad
x_S\ge0,
\qquad
\forall S\in\Acute{S}.
\end{equation}

Given a defender mixed strategy $x$, the attacker computes a best response from the set

\begin{equation}
BR(x)
=
\arg\min_{y\in\Acute{A}}
U_d(x,y).
\end{equation}

The defender then seeks a mixed strategy that maximizes its expected utility against the attacker's best response:

\begin{equation}
\max_{x\in\Acute{S}}
U_d(x,y)
\qquad
\text{s.t.}
\qquad
y\in BR(x).
\end{equation}

The solution concept adopted in this paper is the \emph{Strong Stackelberg Equilibrium (SSE)}, which assumes that the follower breaks ties in favor of the leader and is the standard solution concept for Stackelberg security games
\citep{conitzer2006computing,kroer2022lecture}. The resulting optimization problem is

\begin{equation}
\max_{x\in\Acute{S},\,y\in BR(x)}
U_d(x,y).
\end{equation}

\section{Proposed MLN-EIGS framework}
\label{S:4}

This section presents the proposed MLN-EIGS framework for solving the escape interdiction problem formulated as a Stackelberg security game. The defender acts as the leader and first commits to a mixed strategy, after which the attacker computes an optimal escape route as its best response. To efficiently model the temporal evolution of attacker and defender movements, we construct a multilayer time-expanded transportation network (MLN), where each layer represents the transportation network at a discrete time instant. Successive layers are connected according to the travel time associated with each road segment, thereby preserving the temporal feasibility of all movements.

The proposed framework adopts a double-oracle solution methodology that combines an exact attacker oracle with an efficient approximation defender oracle. The attacker best-response problem is solved exactly, whereas the defender best-response problem is approximated to maintain computational scalability on large transportation networks. The defender commits to a mixed strategy, represented as a probability distribution over a finite set of pure strategies, whereas the attacker computes an exact best response over its pure strategy space. To compute the attacker’s optimal strategy, the interdiction probability $P(n,t)$, induced solely by the defender’s committed mixed strategy, is assigned to each node--time pair in the multilayer network. Since the attacker is intercepted if interdiction occurs at any node along its path, the escape probability is transformed into an equivalent additive objective through a logarithmic transformation. The resulting non-negative transformed node weights enable Dijkstra's shortest-path algorithm to efficiently compute the attacker’s optimal escape path.

The multilayer network representation is essential for modeling the temporal nature of the escape interdiction problem. Because travel times are associated with network edges, the feasibility of both attacker and defender movements depends on both spatial connectivity and arrival times. A conventional single-layer graph cannot distinguish between visits to the same node at different times, resulting in an incorrect evaluation of interception opportunities. By explicitly expanding the transportation network over time, the multilayer representation preserves temporal feasibility while transforming the dynamic path-planning problem into an equivalent static shortest-path problem. Furthermore, it provides a principled mechanism for aggregating interdiction probabilities under mixed defender strategies, thereby ensuring both modeling accuracy and computational efficiency.

Algorithm~\ref{alg:a1} summarizes the overall MLN-EIGS framework. At each iteration, the restricted master problem (\textsc{RestrictedStackelbergLP}) computes the defender's optimal mixed strategy $x^{\star}$ by solving the linear program defined in Eqs.~(1)--(3) over the current restricted strategy sets. The attacker then computes an exact best response using the proposed attacker oracle (\textsc{ExactAO}), while the defender generates an improving strategy using the approximation defender oracle (\textsc{ApproxDO}). Newly generated attacker or defender strategies are added to the corresponding restricted strategy sets, and the restricted Stackelberg game is solved again.

For the defender oracle, a time-expanded multilayer network is constructed in the same manner as for the attacker. Node weights are determined from the current attacker strategy set, where each attacker strategy is assigned equal probability. The defender best-response problem is then approximated by identifying a patrol path that covers as many distinct attacker strategies as possible. To facilitate this process, each attacker strategy is assigned a unique color, and the approximation defender oracle seeks a feasible defender path that visits the maximum number of differently colored nodes, thereby maximizing the expected probability of interdiction.

The iterative strategy-generation process terminates when neither the attacker nor the defender can generate an improving strategy outside the current restricted strategy sets. Since the attacker best-response problem is solved exactly while the defender best-response problem is approximated, the proposed framework computes an \emph{approximate Strong Stackelberg Equilibrium}. The resulting equilibrium strategy pair $(x^{\star},A^{\star})$ and the corresponding defender utility $U_d^{\star}$ constitute the final solution returned by the MLN-EIGS framework.

\begin{algorithm}[H]
\SetAlgoLined
 \textbf{Input:} Initial restricted defender and attacker strategy sets $S'$ and $A'$.\;
 \textbf{Output:} Approximate Stackelberg equilibrium $(x^{\star},A^{\star})$ and defender utility $U_d^{\star}$.\;
 \Repeat{no improving strategy exists}{
 
        $(x^{\star}) \leftarrow RestrictedStackelbergLP (S^{'}, A^{'})$\;

        Compute defender mixed strategy $x^{\star}$.\;
            
		$ BR: \; A^{\star} \leftarrow { ExactAO(x^{\star}) }$\;

		\If{$A^{\star} \notin A^{'} $}{
   			$A^{'} \leftarrow A^{'} \cup \{A^{\star}\}  $\;
            \textbf{continue}\;
   			}

 		$ S^{\star} \leftarrow {ApproxDO(A^{'}) }$\;

		\If{$S^{\star} \notin S^{'} $}{
   			$S^{'} \leftarrow S^{'} \cup \{S^{\star}\}  $\;
            \textbf{continue}\;
   			} 
    }
$U_d^{\star} \leftarrow U_d(x^{\star}, A^{\star})$\;
\tcp{Final defender utility at equilibrium}

\Return $U_d^{\star}, (x^{\star}, A^{\star})$.

 \caption{Stackelberg game solved using the MLN-EIGS algorithm.}
 \label{alg:a1}
\end{algorithm}

Since the attacker best-response problem is solved exactly whereas the defender best-response problem is generated using an approximation defender oracle, the proposed MLN-EIGS framework computes an \emph{approximate Strong Stackelberg Equilibrium}. Consequently, the resulting strategy pair $(x^{\star},A^{\star})$ represents an approximate equilibrium of the original game, and the corresponding defender utility $U_d^{\star}$ is reported as the approximate Stackelberg equilibrium utility.

\subsection{Efficient attacker strategy generation using an exact approach on the time-expanded network}
\label{S:4.1}

The attacker best-response problem is solved exactly on the proposed multilayer time-expanded transportation network. Algorithm~\ref{alg:a3} summarizes the proposed attacker oracle. A copy of the original transportation network is created for each discrete time step, forming a multilayer network in which temporal feasibility is explicitly represented. Successive layers are connected according to the travel times (edge lengths) of the corresponding road segments in the original network. Initially, the interdiction probability and transformed weight of every node--time pair are initialized to zero. Based on the defender's committed mixed strategy, interdiction probabilities are computed for each node--time pair and subsequently transformed into node weights using the proposed logarithmic transformation. These transformed weights are assigned to the corresponding incoming edges, after which Dijkstra's shortest-path algorithm is applied to compute the attacker's optimal escape path.

In computing the attacker’s best response, it is essential to model the probabilistic structure of sequential interdiction correctly. The attacker is intercepted if interdiction occurs at \emph{any} node along the selected path, which corresponds to the union of interception events over all visited nodes. The interdiction probability at each node--time pair $(n,t)$ equals the total probability mass of defender strategies that occupy that node at time $t$. 

Let $P(n,t)$ denote the probability that the attacker is interdicted at node $n$ and time $t$ under the defender's mixed strategy $x$. Since the defender commits to a mixed strategy over the set of pure defender strategies $\Acute{S}$, the interdiction probability at each node--time pair is computed as

\begin{equation}
P(n,t)=\sum_{S\in\Acute{S}}x_S\,I_S(n,t),
\label{eq:interdiction_probability}
\end{equation}
where $x_S$ denotes the probability assigned to defender strategy $S$, satisfying
$\sum_{S\in\Acute{S}}x_S=1$, and

\[
I_S(n,t)=
\begin{cases}
1,&\text{if defender strategy }S\text{ occupies node }n\text{ at time }t,\\
0,&\text{otherwise.}
\end{cases}
\]

The attacker escapes only if interception does not occur at any node--time pair along the selected path. To obtain a computationally tractable optimization model, we assume that interdiction events at distinct node--time pairs are conditionally independent given the defender's mixed strategy. Under this assumption, the attacker's overall escape probability can be expressed as the product of the survival probabilities associated with the visited node--time pairs. Accordingly, the attacker's probability of successfully escaping along path $A$ is given by

\begin{equation}
U_a(x,A)=
\prod_{(n,t)\in A}\left(1-P(n,t)\right),
\label{eq:attacker_utility}
\end{equation}

where $U_a(x,A)$ denotes the attacker's expected utility. Since successful interdiction and successful escape are complementary events, the defender's expected utility is

\begin{equation}
U_d(x,A)=
1-
\prod_{(n,t)\in A}\left(1-P(n,t)\right).
\label{eq:defender_utility}
\end{equation}

The logarithmic transformation converts the multiplicative escape probability into an additive path cost. Specifically, the transformed weight associated with each node--time pair is defined as

\begin{equation}
w(n,t)=-\log\left(1-P(n,t)\right).
\label{eq:log_transform}
\end{equation}

Since $0 \le P(n,t) < 1$, it follows that $w(n,t)\ge 0$ for every node--time pair. Consequently, the transformed multilayer network contains only non-negative edge weights, allowing Dijkstra's shortest-path algorithm to be applied directly. Therefore, computing the attacker's best response is equivalent to finding the minimum-cost path from the crime node to any exit node in the weighted multilayer network. The overall workflow of the proposed attacker oracle is summarized in Algorithm~\ref{alg:a3}.

\begin{algorithm}[H]
\SetAlgoLined
\textbf{Input:} Crime node (START), exit nodes (GOAL), original graph $G_0$, defender mixed strategy $x$\;

\textbf{Output:} Attacker best-response strategy with minimum transformed path cost\;

\BlankLine
\textbf{Construct multilayer network:}\\
\For{$i=0$ \KwTo $t_{\max}$}{
Generate one copy of the original graph, denoted by $G_i$\;
}

\BlankLine
\textbf{Construct temporal edges:}\\
\For{each edge $(S,T)$ in the original graph}{
Let $L$ denote the travel time of edge $(S,T)$\;
\For{$j=0$ \KwTo $t_{\max}-L$}{
Create an edge from node $S$ in layer $G_j$ to node $T$ in layer $G_{j+L}$\;
}
}

\BlankLine
Initialize $P(n,t)=0$ for every node--time pair $(n,t)$\;

Initialize $w(n,t)=0$ for every node--time pair $(n,t)$\;

\BlankLine
\textbf{Compute node--time interdiction probabilities:}\\
\For{each defender strategy $S\in\Acute{S}$}{
\For{each node visit $(n,t_{\mathrm{in}},t_{\mathrm{out}})$ in strategy $S$}{
\For{$t=t_{\mathrm{in}}$ \KwTo $t_{\mathrm{out}}$}{
\[
P(n,t)\leftarrow P(n,t)+x_S;
\]
}
}
}
\caption{Optimal attacker strategy design using the time-expanded multilayer network.}
\label{alg:a3}
\end{algorithm}

\begin{algorithm}[H]
\SetAlgoLined
\textbf{Compute transformed node weights:}\\
\For{each node--time pair $(n,t)$}{
\[
w(n,t)\leftarrow -\log\!\left(1-P(n,t)\right);
\]
}

\BlankLine
\textbf{Assign transformed node weights to incoming edges:}\\
\For{each node--time pair $(n,t)$}{
Assign weight $w(n,t)$ to every incoming edge of node $(n,t)$\;
}

\BlankLine
Apply Dijkstra's shortest-path algorithm on the weighted multilayer network\;

\Return Attacker path with minimum transformed path cost\;
\end{algorithm}

Since the interception probability is associated with arriving at a node rather than traversing an edge, the transformed node weight is assigned to all incoming edges. Consequently, every node contributes exactly once to the total path cost.

The constructed multilayer network contains $O(|V|t_{\max})$ vertices and $O(|E|t_{\max})$ edges, where $|V|$ and $|E|$ denote the number of vertices and edges in the original transportation network, respectively. Consequently, the attacker best-response problem can be solved in

\[
O\!\left((|E|t_{\max})\log(|V|t_{\max})\right)
\]

time using Dijkstra's shortest-path algorithm with a binary heap. Since the temporal expansion factor is linear in $t_{\max}$, the attacker oracle scales efficiently with the time horizon while avoiding the repeated solution of mixed-integer optimization problems. This represents a substantial reduction in computational complexity compared with the repeated solution of mixed-integer linear programs required by the benchmark MILP-based attacker oracle.

Once the transformed node weights have been assigned, the attacker best-response problem can be solved by applying Dijkstra's shortest-path algorithm to the weighted multilayer network. Specifically,

\[
\sum_{(n,t)\in A}
-\log\left(1-P(n,t)\right)
=
-\log\left(
\prod_{(n,t)\in A}
\left(1-P(n,t)\right)
\right),
\]

minimizing the total transformed path cost is equivalent to maximizing the attacker's escape probability,

\[
\prod_{(n,t)\in A}
\left(1-P(n,t)\right),
\]

or, equivalently, minimizing the defender's interception probability,

\[
1-
\prod_{(n,t)\in A}
\left(1-P(n,t)\right).
\]

Therefore, the shortest path returned by Dijkstra's algorithm corresponds exactly to the attacker's optimal best-response strategy under the proposed probabilistic interdiction model.


\subsection{Efficient defender strategy generation using an approximation defender oracle on the time-expanded network}
\label{S:4.2}

Unlike the attacker best-response problem, which admits an exact shortest-path formulation on the multilayer network, the defender best-response problem is computationally more challenging because the defender must simultaneously maximize the expected interdiction probability over multiple attacker strategies while satisfying temporal movement constraints. As proved in Theorem~\ref{thm:color_covering_npcomplete}, the corresponding decision problem is NP-complete. Consequently, an exact polynomial-time algorithm for computing the defender's optimal strategy is unlikely to exist unless $\mathrm{P}=\mathrm{NP}$. We therefore develop a computationally efficient approximation defender oracle that generates effective defender strategies and enables scalable computation of approximate Stackelberg equilibria on large transportation networks.

The proposed defender oracle is based on a greedy \emph{color-covering} heuristic. Each attacker pure strategy in the current restricted strategy set is assigned a unique color, and every node--time pair visited by that attacker strategy inherits the corresponding color. A node may therefore contain multiple colors if it belongs to several attacker strategies. The defender seeks to construct a temporally feasible patrol path whose total travel time does not exceed the prescribed response-time limit while visiting the maximum number of distinct colors. Since each color represents one attacker strategy, covering additional colors increases the number of attacker strategies that can potentially be intercepted. Therefore, maximizing the number of distinct colors visited provides a natural surrogate objective for maximizing the defender's expected utility.

Algorithm~\ref{alg:defender} summarizes the proposed approximation procedure. Starting from the defender's initial location, the algorithm repeatedly computes a shortest path on the time-expanded multilayer network using Dijkstra's algorithm. At each iteration, the nearest reachable node containing at least one previously uncovered color is selected, the defender's patrol path is extended accordingly, and the corresponding color(s) are marked as covered. This greedy process continues until either all colors have been covered or the available travel-time budget has been exhausted.

The initial patrol path may not cover every attacker strategy. Therefore, additional patrol paths are generated by repeatedly applying the same procedure while prioritizing the remaining uncovered colors. The resulting collection of patrol paths forms the restricted defender strategy set, from which the defender subsequently computes an optimal mixed strategy by solving the restricted master problem.

\begin{algorithm}[H]
\SetAlgoLined

\textbf{Input:} Current attacker strategy set $A'$, defender initial node $v_0$, response-time limit $t_{\max}$\;

\textbf{Output:} Approximate defender strategy $S^{\star}$\;

Assign a unique color to every attacker strategy in $A'$\;

Assign the corresponding color(s) to every node--time pair visited by each attacker strategy\;

Initialize the covered-color set $C\leftarrow\emptyset$\;

Initialize defender path $S^{\star}\leftarrow\{v_0\}$\;

Initialize remaining travel budget $T\leftarrow t_{\max}$\;

\While{$T>0$ \textbf{and} not all colors are covered}{

Compute the shortest paths from the current defender location using Dijkstra's algorithm\;

Select the nearest reachable node containing at least one uncovered color\;

\eIf{the node is reachable within the remaining travel budget}{

Append the shortest path to $S^{\star}$\;

Update the remaining travel budget $T$\;

Mark all colors of the selected node as covered\;

Update the current defender location\;

}{
\textbf{break}\;
}

}

\While{uncovered colors remain}{

Generate another patrol path using the same greedy procedure considering only the remaining uncovered colors\;

Add the generated patrol path to the defender strategy set\;

Update the covered-color set\;

}

Construct the defender mixed strategy by randomizing over the generated patrol paths\;

\Return $S^{\star}$\;

\caption{Approximation defender oracle (ApproxDO) based on greedy color covering.}
\label{alg:defender}

\end{algorithm}

The proposed defender oracle relies exclusively on repeated shortest-path computations on the time-expanded multilayer network, thereby avoiding the repeated solution of computationally expensive mixed-integer optimization problems required by the benchmark MILP-based defender oracle. Let $k$ denote the number of attacker strategies (equivalently, the number of distinct colors) currently contained in the restricted strategy set. Since each iteration performs one shortest-path computation and covers at least one previously uncovered color, at most $k$ shortest-path computations are required. The time-expanded multilayer network contains $O(|V|t_{\max})$ vertices and $O(|E|t_{\max})$ edges, where $|V|$ and $|E|$ denote the numbers of vertices and edges in the original transportation network, respectively. Consequently, using Dijkstra's algorithm with a binary heap, the overall computational complexity of the defender oracle is

\[
O\!\left(k\,|E|\,t_{\max}\log\!\left(|V|\,t_{\max}\right)\right).
\]

Although the proposed defender oracle does not guarantee a globally optimal defender strategy or establish a theoretical approximation ratio, it runs in polynomial time and efficiently generates high-quality patrol strategies. Its effectiveness is demonstrated empirically through the computational experiments presented in Section~\ref{S:6}, where it consistently produces defender equilibrium utilities close to those obtained by the benchmark MILP-EIGS formulation while requiring substantially lower computational time.


\subsection{Computational hardness of the defender best-response problem}
\label{S:4.3}

The computational hardness of the defender best-response problem is established through the following decision problem.

\begin{definition}[Color-Covering Problem]
Given a directed graph, an initial vertex, a set of colored vertices, and a path-length bound $n$, determine whether there exists a path originating from the initial vertex whose length does not exceed $n$ and that visits every color at least once.
\end{definition}

\begin{theorem}
\label{thm:color_covering_npcomplete}
The Color-Covering Problem is NP-complete.
\end{theorem}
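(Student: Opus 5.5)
We prove the two parts of NP-completeness separately: membership in NP, then NP-hardness by a reduction from Hamiltonian Path. Throughout we take the path length to be the number of edges, or, in the weighted setting, the sum of edge lengths, with weights encoded in binary.

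\emph{Membership in NP.} The certificate is the path itself, a sequence of vertices $v_0,v_1,\ldots,v_\ell$ starting at the initial vertex. The verifier checks three things:
\begin{enumerate}
\item each consecutive pair $(v_{i-1},v_i)$ is an arc of the graph;
\item the total length of the path is at most $n$;
\item the union of the colors of the visited vertices equals the full color set.
\end{enumerate}
One subtlety is that the certificate must have polynomial size. If $n$ may be exponential, the path may revisit vertices, so we argue as follows. Suppose a feasible path exists. Between consecutive first-visits to new colors, we may replace each segment by a shortest path between its endpoints. This never increases the length. The result is a feasible path consisting of at most $k$ shortest-path segments, where $k$ is the number of colors. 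Each segment has at most $|V|-1$ arcs, so the whole path has $O(k|V|)$ vertices. With $k\le |V|$, this certificate is polynomial, and verification takes polynomial time.

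\emph{NP-hardness.} We reduce from the Hamiltonian Path problem with a fixed start vertex, which is NP-complete. Given a directed graph $H=(V,E)$ with $|V|=N$ and a start vertex $s$, we build the Color-Covering instance as follows:
\begin{itemize}
\item the graph is $H$ itself, with unit edge lengths;
\item the initial vertex is $s$;
\item each vertex $v\in V$ receives its own distinct color $c_v$, so there are exactly $N$ colors;
\item the length bound is $n=N-1$.
\end{itemize}
The reduction is clearly polynomial.

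For the forward direction, a Hamiltonian path starting at $s$ visits all $N$ vertices, hence all $N$ colors, and uses exactly $N-1$ edges. So it is a valid yes-certificate.

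For the converse, suppose a path from $s$ of length at most $N-1$ visits every color. Such a path visits at most $N$ vertex occurrences, counting $s$. It must meet $N$ distinct colors, and each vertex carries exactly one color. Hence the $N$ occurrences are pairwise distinct vertices. So the path is simple and covers all of $V$, which makes it a Hamiltonian path from $s$.

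If the unrooted version of Hamiltonian Path is preferred, we add a new vertex $s^\ast$ with an arc to every vertex of $H$, give $s^\ast$ no color (or its own color), and set $n=N$. The same counting argument then applies.

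\emph{Expected main obstacle.} The only delicate step is the polynomial-size certificate when walks with repeated vertices are allowed under a large numeric bound $n$. It is handled by the shortest-segment replacement argument above. A second point is that the definition says "path" while a patrol may revisit vertices. Our reduction works under either reading. This is because the tight bound $n=N-1$, together with one color per vertex, forces any feasible walk to be simple.

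Finally, the connection to the defender best-response problem is as follows. A yes-instance of Color-Covering is precisely a defender patrol within the time budget that meets every attacker strategy in the restricted set (one color per strategy). Hence the corresponding decision version of the defender oracle is NP-hard as well.
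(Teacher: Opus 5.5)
Your proof is correct, but it takes a different route from the paper's.

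The paper reduces from 3-SAT. It builds a layered acyclic graph with a source at level $0$ and a \texttt{true}/\texttt{false} vertex pair at each level $i$, and fully connects consecutive levels. Each clause gets a color placed on the vertices of its literals, so source-to-leaf paths are truth assignments and a clause color is met iff the clause is satisfied. In that construction the length bound is vacuous, since every path has at most $n$ arcs. The graph is a DAG, and the hardness comes entirely from color classes spread over several vertices and from vertices carrying several colors. That is the shape of the time-expanded network and of the colored attacker strategies in the defender oracle.

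Your reduction from rooted Hamiltonian Path is shorter. Its converse is airtight: at most $N$ vertex occurrences must meet $N$ singleton colors, so they are distinct. However, it relies essentially on cycles in $H$ and, under the walk reading, on the tight bound $n=N-1$. Hamiltonian Path is polynomial on DAGs, so your instances say nothing directly about layered networks. Your closing claim about the defender oracle therefore needs an extra step, e.g.\ time-expanding $H$ over $N$ layers and giving $c_v$ to every copy of $v$. As in the paper, one would still have to argue that the color classes are realizable as attacker paths.

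In exchange, your membership argument supplies what the paper merely asserts. Shortcutting the walk between successive first visits to new colors is exactly what is needed if walks and binary-encoded bounds are allowed. One small fix there: $k\le|V|$ need not hold, since a vertex may carry several colors (as in the paper's own construction). You do not need it, because every new color is first met at a previously unvisited vertex. So there are at most $\min(k,|V|)$ segments.
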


\begin{proof}[Proof Sketch]
The proof is obtained by a polynomial-time reduction from the classical 3-SAT problem.

Consider an arbitrary 3-SAT instance consisting of $n$ Boolean variables $(x_1,\ldots,x_n)$ and $m$ clauses. We construct a layered directed graph with $n+1$ levels. Level $0$ contains a single source vertex. For each variable $x_i$ $(i=1,\ldots,n)$, level $i$ contains two vertices corresponding to the assignments $x_i=\texttt{true}$ and $x_i=\texttt{false}$. Every vertex at level $i$ is connected by directed edges to both vertices at level $i+1$, thereby representing every possible truth assignment as a root-to-leaf path.

Each clause is assigned a unique color. If a clause contains the literal $x_i$, then the vertex corresponding to the assignment $x_i=\texttt{true}$ is assigned the color of that clause. Similarly, if a clause contains the literal $\neg x_i$, then the vertex corresponding to the assignment $x_i=\texttt{false}$ receives the corresponding color.

Each root-to-leaf path uniquely represents a truth assignment of the Boolean variables. Moreover, a clause color is visited if and only if the corresponding clause is satisfied by that assignment. Consequently, the constructed graph contains a path covering every color if and only if the original 3-SAT instance is satisfiable.

Since the reduction can be performed in polynomial time and the Color-Covering Problem belongs to NP, the Color-Covering Problem is NP-complete.
\end{proof}

\section{MILP-EIGS benchmarking algorithm}
\label{S:5}

To evaluate the effectiveness of the proposed MLN-EIGS framework, we employ an exact Stackelberg game formulation based on mixed-integer linear programming (MILP) as the benchmark. The benchmark, referred to as \textit{MILP-EIGS}, follows the same leader--follower framework as MLN-EIGS but computes both the attacker and defender best responses exactly using the MILP formulations proposed by \citet{zhang2017optimal}. Specifically, the attacker oracle (\textit{bestAo}) computes the exact attacker best response, while the defender oracle (\textit{bestDo}) generates the exact defender best response.

Algorithm~\ref{alg:a2} summarizes the MILP-EIGS framework. Similar to MLN-EIGS, the algorithm iteratively solves the restricted Stackelberg game using the current attacker and defender strategy sets. At each iteration, the restricted master problem computes the defender's optimal mixed strategy. The attacker then generates an exact best response using the MILP-based attacker oracle (\textit{bestAo}). If a new attacker strategy is identified, it is added to the restricted strategy set and the restricted game is solved again. Otherwise, the defender computes an exact best response using the MILP-based defender oracle (\textit{bestDo}). The algorithm terminates when neither player can generate a new improving strategy, yielding a Stackelberg equilibrium of the restricted game.

Since both the attacker and defender best-response problems are solved exactly, the resulting strategy pair $(x^{\star},A^{\star})$ satisfies the equilibrium conditions of the Stackelberg game. However, both MILP oracles repeatedly solve mixed-integer optimization problems during the iterative strategy-generation process. As reported by \citet{zhang2017optimal}, the vehicle escape interdiction problem is NP-hard, and the repeated solution of MILPs results in substantial computational overhead for large transportation networks.

The MILP-EIGS framework is therefore used as the benchmark for evaluating the proposed MLN-EIGS algorithm. In the experimental study, the two approaches are compared in terms of defender equilibrium utility, computational time, and optimality gap.

\begin{algorithm}[H]
\SetAlgoLined

\textbf{Input:} Initial restricted defender and attacker strategy sets $S'$ and $A'$\;

\textbf{Output:} Stackelberg equilibrium $(x^{\star},A^{\star})$ and defender utility $U_d^{\star}$\;

\Repeat{no improving strategy exists}{

$x^{\star}\leftarrow
RestrictedStackelbergLP(S',A')$\;

$A^{\star}\leftarrow bestAo(x^{\star})$\;

\If{$A^{\star}\notin A'$}{
$A'\leftarrow A'\cup\{A^{\star}\}$\;
\textbf{continue}\;
}

$S^{\star}\leftarrow bestDo(A')$\;

\If{$S^{\star}\notin S'$}{
$S'\leftarrow S'\cup\{S^{\star}\}$\;
\textbf{continue}\;
}

}

$U_d^{\star}\leftarrow U_d(x^{\star},A^{\star})$\;

\Return $(x^{\star},A^{\star}),U_d^{\star}$\;

\caption{MILP-EIGS benchmarking algorithm.}
\label{alg:a2}

\end{algorithm}

\section{Results and discussion}
\label{S:6}

In this section, we evaluate the performance of the proposed MLN-EIGS framework. All experiments were implemented in Python~3.6 and executed on an Ubuntu~18.04.6 LTS workstation equipped with an Intel Core i5-10300H processor, 15~GB RAM, and IBM CPLEX~12.8 as the optimization solver.

To illustrate the proposed attacker oracle, we first consider the four-node transportation network shown in Fig.~\ref{image-1}. In this example, nodes~2 and~3 represent police stations, node~1 is the crime node, node~4 is the exit node, and the maximum response time is $t_{\max}=5$. Here, $0\_4$ denotes node~4 at time $t=0$ in the multilayer network. Two defender strategies with mixed probabilities of $1/3$ and $2/3$ are assumed, from which the node--time interdiction probabilities are computed. These probabilities are transformed into node weights according to the proposed logarithmic formulation, and Dijkstra's algorithm is applied to the time-expanded multilayer network (Fig.~\ref{image-4}) to compute the attacker's exact best response. The resulting optimal escape path, shown by the red trajectory in Fig.~\ref{image-4}, is
\[
0\_1 \rightarrow 3\_3 \rightarrow 5\_4.
\]
The corresponding attacker strategy and computational time are reported in Table~\ref{tab:table1}. This illustrative example demonstrates that the proposed attacker oracle efficiently computes the optimal escape strategy on the multilayer network while preserving the probabilistic interdiction model.

\begin{figure}[H]
\centering
\includegraphics[scale=0.35]{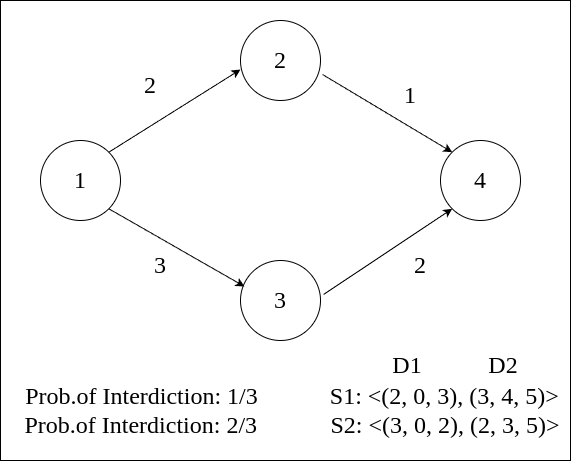}
\caption{Sample network for designing the optimal attacker strategy.}
\label{image-1}
\end{figure} 

\begin{figure}[H]
\centering
\includegraphics[scale=0.36]{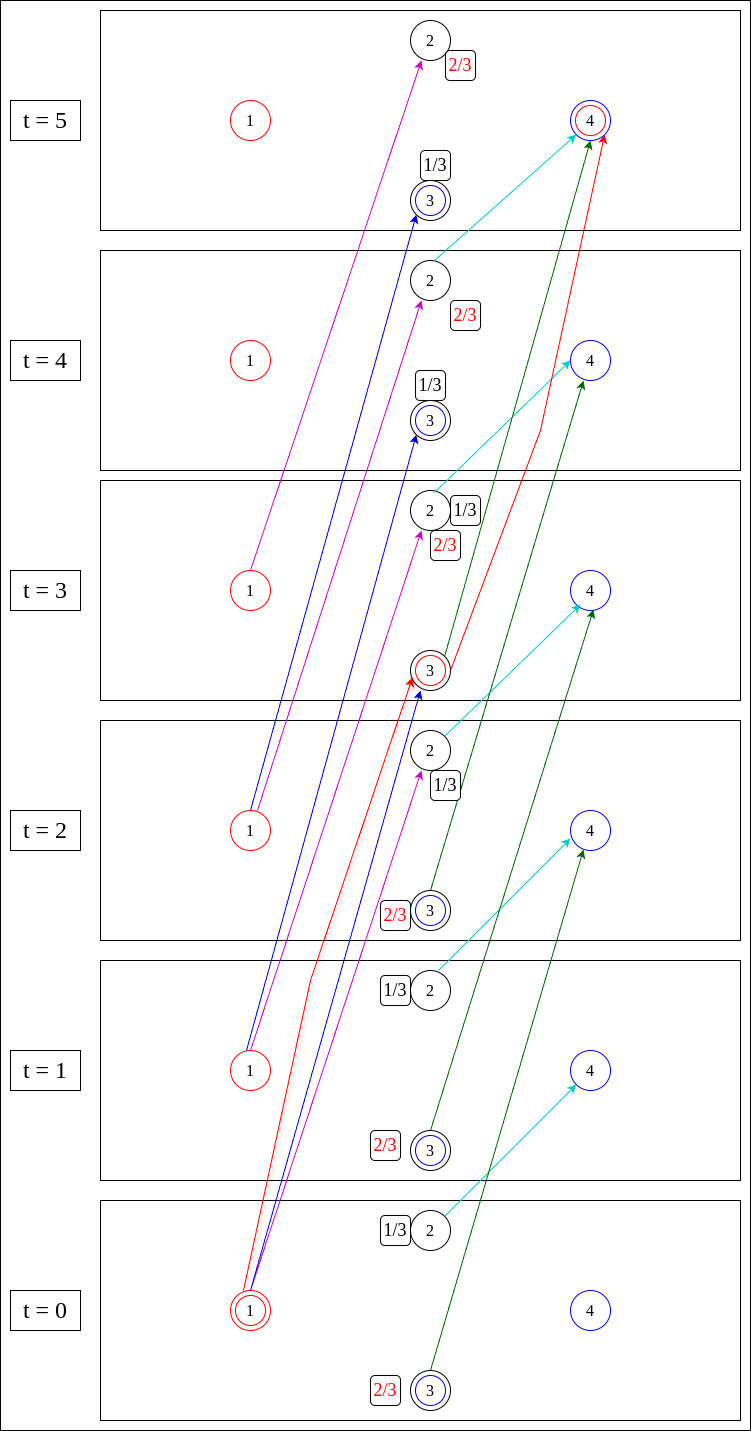}
\caption{Design of a multi-layer network for attacker considering a sample network (Fig. \ref{image-1}).}
\label{image-4}
\end{figure}

\begin{table}[H]
  \begin{center}
   \caption{\\Optimal attacker strategy design using Dijkstra algorithm on time expanded network.}
    \label{tab:table1}
    \scalebox{0.8}{
    \begin{tabular}{llll} 
    \hline
      \multicolumn{4}{c}{\textbf{Game Parameters}} \\ \hline
      \multicolumn{4}{c}{\textbf{Network Size: 4 Nodes, Crime Node: $0\_1$, Police Stations: $0\_2$, $0\_3$, $T_{max}$: 5, Exit Point: 4}}\\ \hline
      \parbox[t]{2cm}{Test Case} &  \parbox[t]{4cm}{Optimal Attacker strategy} & \parbox[t]{4cm}{Utility of the Final\\  Optimal Attacker Strategy} & \parbox[t]{2cm}{Run Time \\(Sec)}\\ \hline
       1 & [$0\_1, 3\_3, 5\_4$] & 0.0 & 0.004\\ \hline
       2 & [$0\_1, 3\_3, 5\_4$] & 0.0 & 0.004\\ \hline
       3 & [$0\_1, 3\_3, 5\_4$] & 0.0 & 0.0039\\ \hline
       4 & [$0\_1, 3\_3, 5\_4$] & 0.0 & 0.004\\ \hline
       5 & [$0\_1, 3\_3, 5\_4$] & 0.0 & 0.004\\ \hline
    \end{tabular}}
  \end{center}
\end{table}

To illustrate the proposed defender oracle, we next consider the six-node transportation network shown in Fig.~\ref{image-2}. In this example, node~6 represents the police station, node~1 is the crime node, node~5 is the exit node, and the maximum response time is $t_{\max}=6$. Here, $0\_6$ denotes node~6 at time $t=0$ in the multilayer network. Three attacker strategies, each assigned an equal probability, are used as input. Each attacker strategy is represented by a unique color, and every node--time pair visited by that strategy inherits the corresponding color in the multilayer network. The proposed approximation defender oracle is then applied to the time-expanded network (Fig.~\ref{image-3}) to generate a high-quality patrol strategy. The resulting defender patrols, illustrated by the green trajectories in Fig.~\ref{image-3}, follow the paths
\[
0\_6 \rightarrow 2\_3 \rightarrow 4\_4 \rightarrow 5\_5 \rightarrow 6\_5
\]
and
\[
0\_6 \rightarrow 2\_3 \rightarrow 4\_5 \rightarrow 5\_5 \rightarrow 6\_5.
\]
The corresponding defender strategies and computational times are reported in Table~\ref{tab:table2}. This illustrative example shows that the proposed approximation defender oracle efficiently generates defender patrol strategies that collectively cover all attacker strategies while satisfying the prescribed temporal constraints.

\begin{figure}[H]
\centering
\includegraphics[scale=0.5]{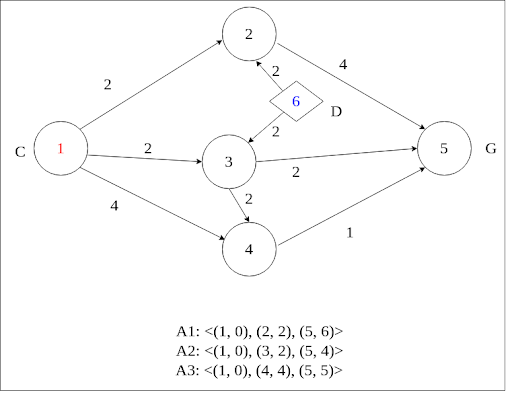}
\caption{Sample network for designing near-optimal defender strategy.}
\label{image-2}
\end{figure} 

\begin{table}[H]
  \begin{center}
   \caption{\\Defender strategy design using approximation defender oracle.}
    \label{tab:table2}
    \scalebox{0.8}{
    \begin{tabular}{llll} 
    \hline
      \multicolumn{4}{c}{\textbf{Game Parameters}} \\ \hline
      \multicolumn{4}{c}{\textbf{Network Size: 6 Nodes, Crime Node: $0\_1$, Police Station: $0\_6$, $T_{max}$: 6, Exit Point: 5}}\\ \hline
      \parbox[t]{2cm}{Test Case} &  \parbox[t]{4cm}{Final Defender strategy} & \parbox[t]{4cm}{Utility of the Final\\ Defender Strategy} & \parbox[t]{2cm}{Run Time \\(Sec)}\\ \hline
       1 & [$0\_6, 2\_3, 4\_4, 5\_5, 6\_5$] & 0.0 & 0.0086\\ \hline
       2 & [$0\_6, 2\_3, 4\_5, 5\_5, 6\_5$] & 0.0 & 0.0090\\ \hline
       3 & [$0\_6, 2\_3, 4\_5, 5\_5, 6\_5$] & 0.0 & 0.0081\\ \hline
       4 & [$0\_6, 2\_3, 4\_4, 5\_5, 6\_5$] & 0.0 & 0.0083\\ \hline
       5 & [$0\_6, 2\_3, 4\_5, 5\_5, 6\_5$] & 0.0 & 0.0081\\ \hline
    \end{tabular}}
  \end{center}
\end{table}

\begin{figure}[H]
\centering
\includegraphics[scale=0.36]{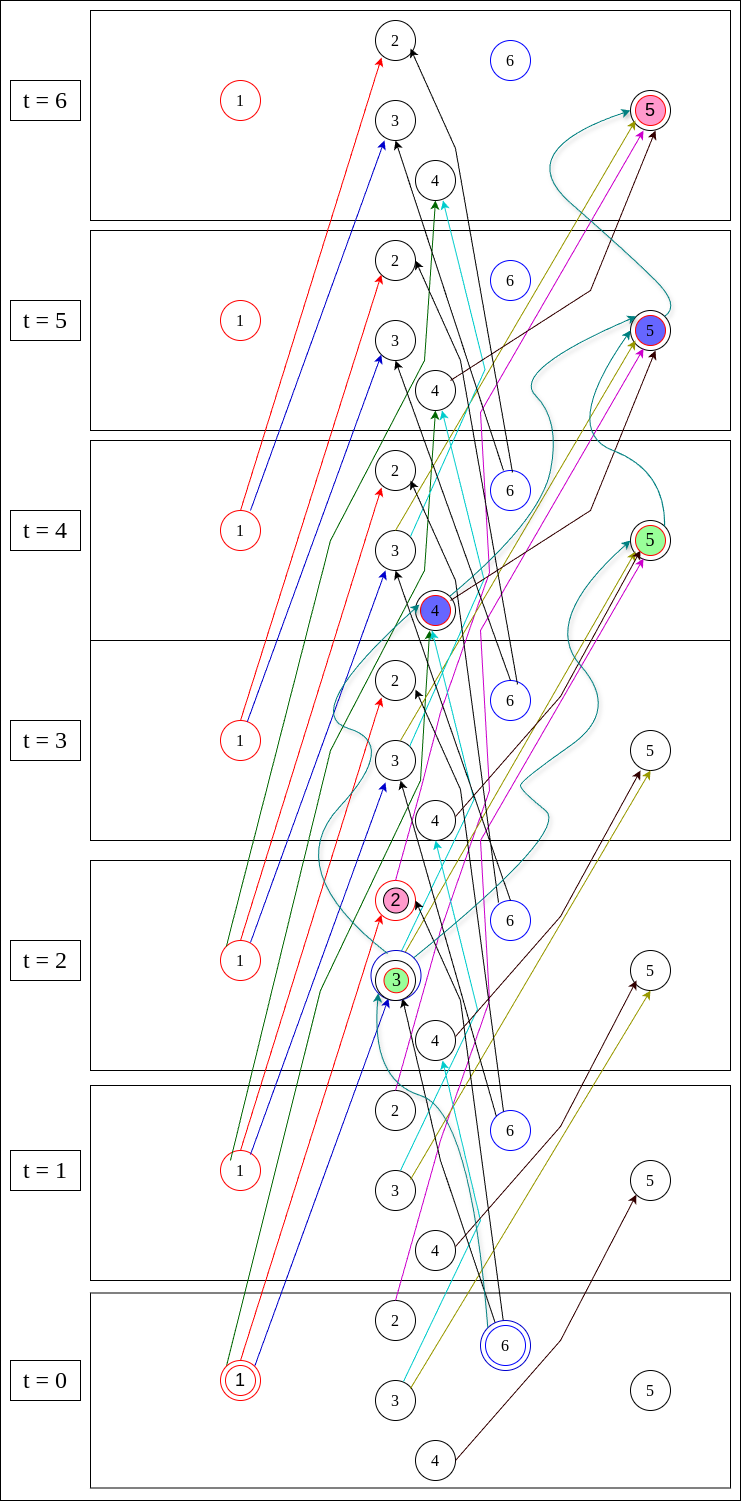}
\caption{Design of a multi-layer network for defender considering a sample network (Fig. \ref{image-2}).}
\label{image-3}
\end{figure} 

After illustrating the attacker and defender oracles on small synthetic examples, we evaluate the complete MLN-EIGS framework on a large real-world transportation network and compare its performance with the benchmark MILP-EIGS formulation. The benchmark framework employs the exact attacker and defender oracles, \textit{bestAo} and \textit{bestDo}, proposed by \citet{zhang2017optimal}. In both approaches, the Stackelberg game is considered to have converged when neither player can generate an improving strategy outside the current restricted strategy sets.

The computational experiments were conducted on the Central Kolkata transportation network, which consists of 461 nodes and 1,020 directed edges. The network was extracted from OpenStreetMap (OSM) and imported into the SUMO traffic simulator, as illustrated in Fig.~\ref{image-5}. The proposed MLN-EIGS framework was benchmarked against the exact MILP-EIGS formulation over 25 test instances generated by varying the initial defender locations, crime node, defender speed, attacker speed, and maximum response time. For each instance, we recorded the defender equilibrium utility, computational runtime, and the utility gap between the two approaches to evaluate both solution quality and computational efficiency.

\begin{figure}[H]
\centering
\includegraphics[scale=0.55]{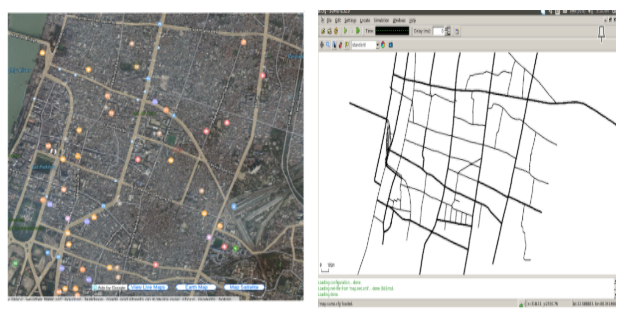}
\caption{Central Kolkata map (OSM) imported into SUMO for dataset generation}
\label{image-5}
\end{figure}

\begin{table*}[t]
\centering
\caption{Performance comparison between the benchmark MILP-EIGS and the proposed MLN-EIGS.}
\label{tab:comparison_results}
\scalebox{0.6}{
\begin{tabular}{cccccccccccc}
\hline
\textbf{Test} &
\textbf{Defender Locations} &
\textbf{Crime} &
\textbf{Defender} &
\textbf{Attacker} &
\textbf{Time} &
\textbf{Exit} &
\textbf{MILP} &
\textbf{MLN} &
\textbf{Utility} &
\textbf{MILP} &
\textbf{MLN} \\
\textbf{Case} &
\textbf{$(D_1,D_2,D_3,D_4)$} &
\textbf{Node} &
\textbf{Speed} &
\textbf{Speed} &
\textbf{Limit} &
\textbf{Points} &
\textbf{Utility} &
\textbf{Utility} &
\textbf{Gap} &
\textbf{Time (s)} &
\textbf{Time (s)}\\
\hline

1  & 132,277,365,409 & 127 & 1.0 & 4.0 & 200 & 13,73 & Infeasible & 0.742 & NA    & 0.587 & 0.292\\
2  & 14,66,107,446   & 21  & 1.2 & 1.0 & 90  & 13,73 & 1.000      & 0.615 & 0.385 & 7.250 & 0.653\\
3  & 6,17,100,46        & 107 & 5.0 & 5.0 & 40  & 13,73 & 0.000      & 0.000 & 0.000 & 2.157 & 0.400\\
4  & 14,66,107,446   & 21  & 1.2 & 1.0 & 90  & 13,73 & 1.000      & 1.000 & 0.000 & 7.250 & 0.653\\
5  & 7,55,109,448    & 15  & 1.2 & 1.5 & 140 & 13,73 & 0.000      & 0.000 & 0.000 & 5.331 & 0.497\\
6  & 81,111,118,169  & 55  & 1.8 & 1.6 & 180 & 13,73 & 1.000      & 0.796 & 0.204 &15.112 & 2.500\\
7  & 6,17,100,271    & 12  & 0.5 & 1.5 & 90  & 13,73 & 0.000      & 0.000 & 0.000 & 3.465 & 0.310\\
8  & 56,77,99,102    &103  & 2.0 & 2.0 & 220 & 13,73 & Infeasible & 0.864 & NA    & 0.579 & 0.281\\
9  & 1,53,59,76      & 53  & 1.5 & 2.2 & 220 & 13,73 & Infeasible & 0.719 & NA    & 0.568 & 0.278\\
10 & 2,57,70,72      & 58  & 2.0 & 2.5 & 250 & 13,73 & Infeasible & 0.582 & NA    & 0.552 & 0.283\\
11 & 3,25,35,50      & 25  & 2.5 & 3.0 & 230 & 13,73 & Infeasible & 0.907 & NA    & 0.560 & 0.286\\
12 & 49,51,60,89     & 60  & 2.2 & 3.5 & 240 & 13,73 & Infeasible & 0.674 & NA    & 0.568 & 0.282\\
13 & 26,32,38,40     & 32  & 3.0 & 4.0 & 240 & 13,73 & Infeasible & 0.488 & NA    & 0.578 & 0.286\\
14 & 37,44,45,46     & 45  & 3.5 & 4.5 & 240 & 13,73 & Infeasible & 0.821 & NA    & 0.563 & 0.280\\
15 & 61,62,63,64     & 61  & 4.0 & 5.0 & 230 & 13,73 & Infeasible & 0.396 & NA    & 0.552 & 0.284\\
16 & 78,95,195,205   & 95  & 1.0 & 5.0 & 160 & 13,73 & Infeasible & 0.731 & NA    & 0.573 & 0.288\\
17 & 87,96,228,379   &228  & 1.4 & 2.5 & 200 & 13,73 & 1.000      & 1.000 & 0.000 & 7.412 & 0.717\\
18 & 87,96,228,379   &228  & 1.4 & 2.5 & 200 & 13,73 & 1.000      & 1.000 & 0.000 & 7.412 & 0.717\\
19 & 10,52,56,77     & 10  & 1.1 & 2.4 & 180 & 13,73 & 1.000      & 1.000 & 0.000 & 7.284 & 0.904\\
20 & 265,361,430,448 &361  & 2.5 & 2.0 & 160 & 13,73 & 1.000      & 0.638 & 0.362 & 5.947 & 3.231\\
21 & 10,52,56,77     & 10  & 1.1 & 2.4 & 180 & 13,73 & 1.000      & 0.792 & 0.208 & 7.284 & 0.904\\
22 & 54,57,58,72     & 72  & 2.1 & 3.2 & 200 & 13,73 & Infeasible & 0.533 & NA    & 0.563 & 0.283\\
23 & 4,11,30,41      & 41  & 3.0 & 3.8 & 190 & 13,73 & Infeasible & 0.876 & NA    & 0.576 & 0.284\\
24 & 27,39,47,61     & 27  & 3.2 & 4.2 & 240 & 13,73 & Infeasible & 0.621 & NA    & 0.571 & 0.283\\
25 & 44,46,61,89     & 89  & 4.0 & 5.5 & 220 & 13,73 & 1.000      & 0.914 & 0.086 & 8.312 & 1.984\\

\hline
\end{tabular}}
\end{table*}

Table~\ref{tab:comparison_results} reports the computational performance of the proposed MLN-EIGS framework and the benchmark MILP-EIGS formulation over 25 test instances generated by varying the initial defender locations, crime node, defender speed, attacker speed, and maximum response time. For each instance, the defender equilibrium utility and computational runtime were recorded to evaluate both solution quality and computational efficiency.

For all feasible instances, MLN-EIGS produces defender utilities that closely approximate those obtained by the exact MILP-EIGS formulation. As shown in Table~\ref{tab:comparison_results} and Fig.~\ref{image-10}, identical defender utilities are obtained for Cases 3, 4, 5, 7, 17, 18, and 19, resulting in zero utility gap. For the remaining feasible instances, MLN-EIGS yields fractional defender utilities that remain close to the benchmark solutions, with the largest observed utility gap being 0.385 (Case 2). These fractional utilities arise naturally from the probabilistic multilayer network formulation, where the defender's utility is computed as the cumulative probability of interdiction along the attacker's path. In contrast, the MILP-EIGS formulation produces deterministic binary utilities corresponding to complete interception or successful escape.

Several test instances are reported as \emph{Infeasible} by the MILP-EIGS formulation. In these instances, no feasible attacker path exists from the specified crime node to any exit node within the prescribed response time. Consequently, the algorithm terminates during the preliminary feasibility verification stage without constructing or solving the complete mixed-integer optimization model. The corresponding runtimes (approximately 0.55--0.60 seconds) therefore represent only the computational effort required for input parsing, graph construction, and shortest-path feasibility verification rather than the solution of the MILP. In contrast, feasible instances proceed to iterative attacker and defender strategy generation, resulting in substantially higher computational times.

The runtime comparison shown in Fig.~\ref{image-9} further demonstrates the computational efficiency of the proposed MLN-EIGS framework. For every feasible instance, MLN-EIGS consistently requires significantly less computational time than the benchmark MILP-EIGS formulation. The computational advantage becomes increasingly pronounced for more challenging instances because the benchmark repeatedly solves mixed-integer optimization problems for both attacker and defender strategy generation within the double-oracle framework. In contrast, MLN-EIGS replaces the attacker oracle with a polynomial-time shortest-path computation on the multilayer network and approximates the defender oracle using an efficient greedy algorithm. Consequently, the proposed framework substantially reduces the overall computational effort while maintaining competitive equilibrium quality. Across all feasible instances, MLN-EIGS consistently reduced computational time by several factors while maintaining comparable defender equilibrium utilities.

Overall, the experimental results demonstrate that MLN-EIGS provides an effective and computationally efficient approximation to the exact MILP-EIGS formulation. Although the benchmark computes exact Stackelberg equilibrium solutions for feasible instances, MLN-EIGS achieves comparable defender equilibrium utilities with substantially lower computational cost. The results indicate that the proposed framework scales substantially better than the benchmark MILP formulation while preserving high solution quality, making it well suited for solving large-scale dynamic escape interdiction problems on real transportation networks.

\begin{figure}[H]
\centering
\includegraphics[scale=0.45]{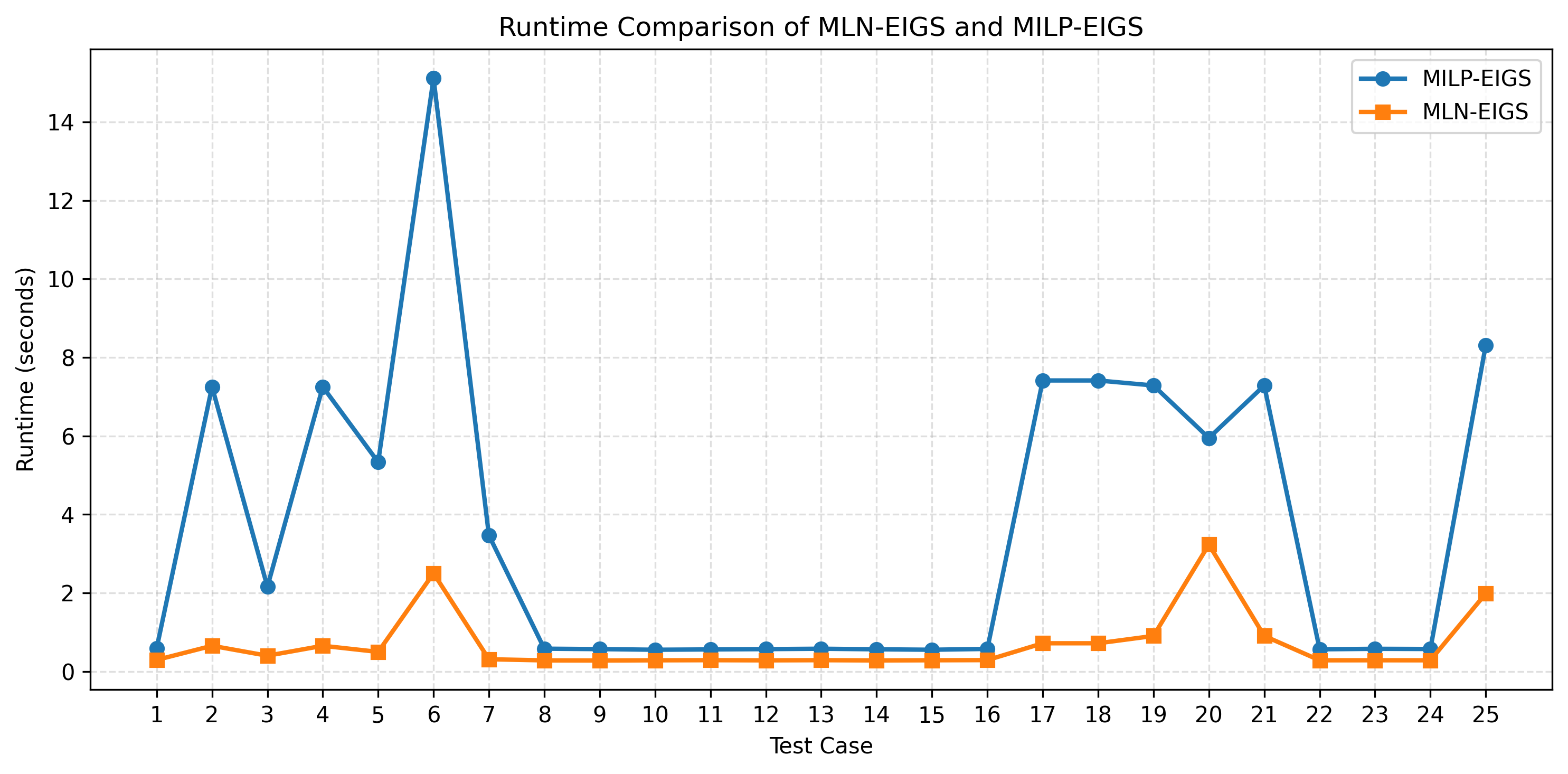}
\caption{Runtime comparison between the proposed MLN-EIGS and benchmark MILP-EIGS algorithms over 25 test instances.}
\label{image-9}
\end{figure}

\begin{figure}[H]
\centering
\includegraphics[scale=0.45]{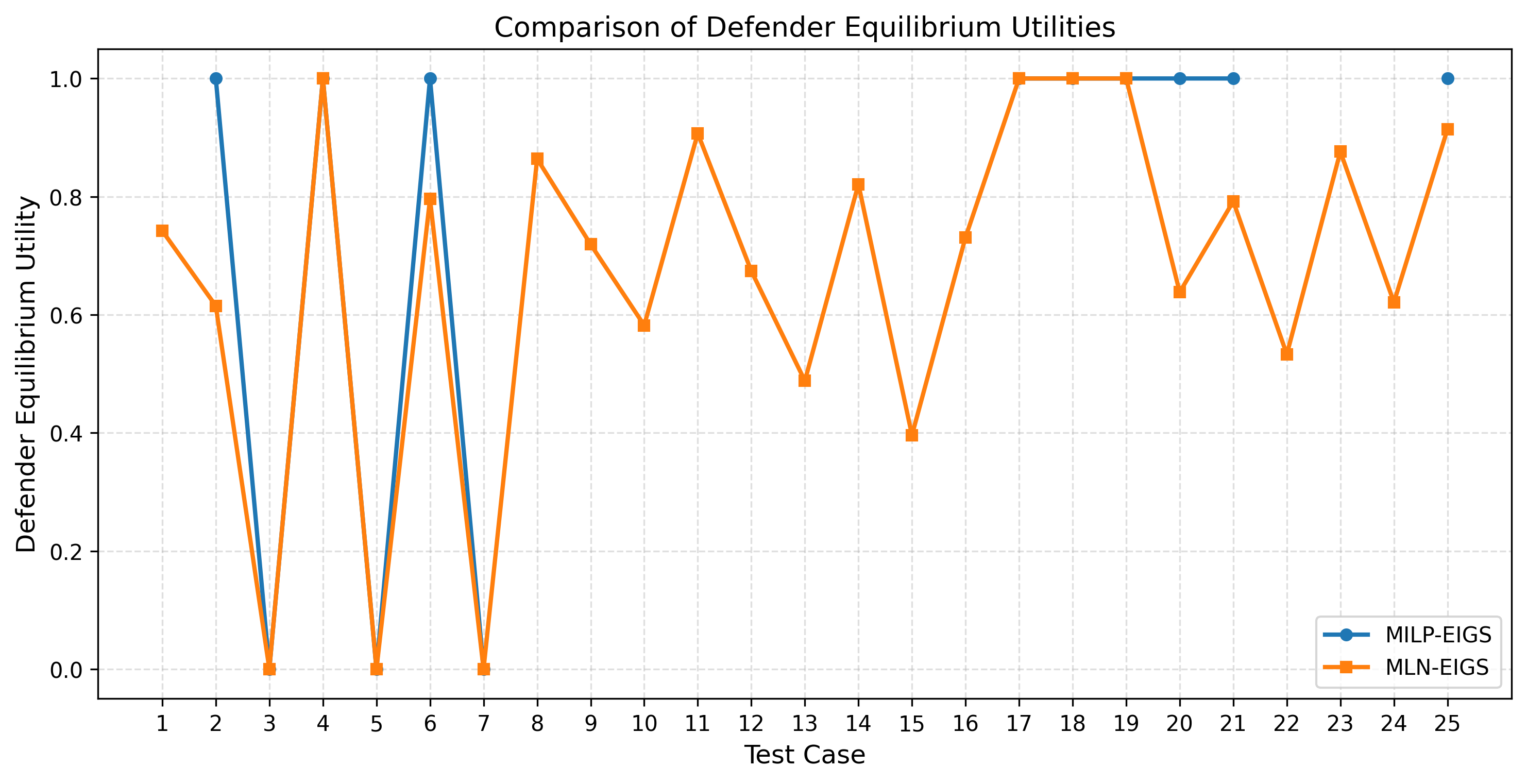}
\caption{Comparison of defender equilibrium utilities obtained by MLN-EIGS and MILP-EIGS over 25 benchmark instances.}
\label{image-10}
\end{figure}

\section{Conclusion}
\label{S:7}

This paper presented MLN-EIGS, a multilayer network-based Stackelberg framework for solving dynamic escape interdiction problems on time-dependent transportation networks. The proposed framework integrates temporal network expansion, probabilistic interdiction modeling, and Stackelberg game theory into a unified optimization framework. By representing attacker movements on a multilayer time-expanded network and reformulating the multiplicative escape-probability objective through a logarithmic transformation, the attacker best-response problem is reduced to an equivalent shortest-path problem that can be solved efficiently using Dijkstra's algorithm. Furthermore, the defender best-response problem is addressed through a polynomial-time approximation oracle within a double-oracle framework, enabling scalable computation of approximate Stackelberg equilibria on large transportation networks.

Computational experiments conducted on the Central Kolkata transportation network over 25 benchmark instances demonstrate that the proposed MLN-EIGS framework achieves defender equilibrium utilities that closely approximate those obtained by the benchmark MILP-EIGS formulation while consistently requiring substantially lower computational time. These results show that the multilayer network representation, together with the exact attacker oracle and approximation defender oracle, significantly improves computational scalability without sacrificing the quality of the computed equilibrium solutions.

The proposed framework provides a flexible foundation for several future research directions. Possible extensions include incorporating stochastic travel times, real-time traffic information, uncertain transportation environments, heterogeneous defender resources, and adaptive or learning-based attacker behavior. Another promising direction is the development of exact algorithms or approximation defender oracles with theoretical performance guarantees for the defender best-response problem, thereby strengthening both the computational and theoretical aspects of the proposed framework.

Overall, the proposed MLN-EIGS framework demonstrates that probabilistic multilayer network modeling provides a scalable and computationally efficient approach for computing approximate Stackelberg equilibria in large-scale dynamic escape interdiction problems, while achieving solution quality comparable to exact MILP-based methods.

\section*{Acknowledgments}

The authors are grateful to the members of the Multi-Agent Laboratory at Kyushu University for their valuable discussions and constructive comments. This research was supported by the Japan Society for the Promotion of Science (JSPS) through Grants-in-Aid for Scientific Research (KAKENHI).

\bibliographystyle{elsarticle-harv}
\bibliography{mlnEIGS}
\end{document}